\documentclass[fullpage]{article}

\usepackage{theorem,latexsym,graphicx}
\usepackage{amsmath,amssymb,enumerate}
\usepackage{xspace}
\usepackage{fullpage}
\usepackage{bm}
\usepackage{ifpdf}
\usepackage{shadow,shadethm,color}
\usepackage{algorithm}

\usepackage{algorithm}
\usepackage[noend]{algpseudocode}

\usepackage{subfigure}
\usepackage{verbatim}
\usepackage{paralist}

\allowdisplaybreaks

\definecolor{Darkblue}{rgb}{0,0,0.4}
\definecolor{Brown}{cmyk}{0,0.81,1.,0.60}
\definecolor{Purple}{cmyk}{0.45,0.86,0,0}
\newcommand{\mydriver}{hypertex}
\ifpdf
 \usepackage{pdfpages}
 \renewcommand{\mydriver}{pdftex}
\fi
\usepackage[breaklinks,\mydriver]{hyperref}
\hypersetup{colorlinks=true,
            citebordercolor={.6 .6 .6},linkbordercolor={.6 .6 .6},%
citecolor=Darkblue,urlcolor=black,linkcolor=Darkblue,pagecolor=black}
\newcommand{\lref}[2][]{\hyperref[#2]{#1~\ref*{#2}}}

\newtheorem{theorem}{Theorem}[section]

\newtheorem{lemma}[theorem]{Lemma}
\newshadetheorem{lemmashaded}[theorem]{Lemma}

\numberwithin{algorithm}{section}

\newenvironment{proof}{{\bf Proof:  }}{\hfill\rule{2mm}{2mm}}

%
%
%

%
%
%

\newcommand{\junk}[1]{}
\newcommand{\ignore}[1]{}

\newcommand{\N}[0]{{\ensuremath{\mathbb{N}}}}

\def\floor#1{\lfloor #1 \rfloor}
\def\ceil#1{\lceil #1 \rceil}


\newcommand{\E}{{\mathbf{E}}}

\newcommand{\Ex}{\E}
\newcommand{\one}{\mathbf{1}}
\newcommand{\rest}{\ensuremath{rest}}
\newcommand{\pder}[2]{\frac{\partial#1}{\partial#2}}
\newcommand{\pdertwo}[2]{\frac{\partial^2#1}{\partial{#2}^2}}


\newcounter{note}[section]

\newcommand{\agnote}[1]{} 
\newcommand{\ktnote}[1]{}
\newcommand{\dwnote}[1]{}
\newcommand{\uwnote}[1]{}

\hyphenation{Ber-noul-li}

\newcommand{\ud}{\texttt{d}}

\begin{document}


\title{Consistent Weighted Sampling Made Fast, Small, and Easy}

\author{Bernhard Haeupler\\Carnegie Mellon University\\Haeupler@cs.cmu.edu \and Mark Manasse\\msanasse@gmail.com \and Kunal Talwar\footnote{This research was performed while all the authors were at Microsoft Research Silicon Valley.}
\\k@kunaltalwar.org}
\date{}
\maketitle
\begin{abstract}
Document sketching using Jaccard similarity has been a workable effective technique in reducing near-duplicates in Web page and image search results, and has also proven useful in file system synchronization, compression and learning applications~\cite{BroderGMZ97,Broder97,BroderCFM98}. 

Min-wise sampling can be used to derive an unbiased estimator for Jaccard similarity and taking a few hundred independent consistent samples leads to compact sketches which provide good estimates of pairwise-similarity. 
Early sketching papers handled weighted similarity, for integer weights, by transforming an element of weight $w$ into $w$ elements of unit weight, each requiring their own hash function evaluation in the consistent sampling. Subsequent work~\cite{GollapudiP,ManasseMT,Ioffe10} removed the integer weight restriction, and showed how to produce samples using a constant number of hash evaluations for any element, independent of its weight. Another drastic speedup for sketch computations was given by Li, Owen and Zhang~\cite{LiOZ12} who showed how to compute such (near-)independent samples in one shot, requiring only a constant number of hash function evaluations per element. Unfortunately this latter improvement works only for the unweighted case. 

In this paper we give a simple, fast and accurate procedure which reduces weighted sets to unweighted sets with small impact on the  Jaccard similarity. This leads to compact sketches consisting of many (near-)independent weighted samples which can be computed with just a small constant number of hash function evaluations per weighted element. The size of the produced unweighted set is furthermore a tunable parameter which enables us to run the unweighted scheme from~\cite{LiOZ12} in the regime where it is most efficient. Even when the sets involved are unweighted, our approach gives a simple solution to the densification problem that~\cite{ShrivastavaL14a,ShrivastavaL14b} attempt to address. 

Unlike previously known schemes, ours does not result in an unbiased estimator. However, we prove that the bias introduced by our reduction is negligible and that the standard deviation is comparable to the unweighted case. We also empirically evaluate our scheme and show that it gives significant gains in computational efficiency, without any measurable loss in accuracy.
\end{abstract}

\section{Introduction}

Web experiments have repeatedly shown that most breadth-first collections of pages contain many unique pages, but also contain large clusters of near-duplicate pages.  Typical studies have found that duplicate and near-duplicate pages account for between a third and a half of a corpus.

Min-wise sampling~\cite{Broder97,BroderGMZ97,BroderCFM98} has been widely used in Web and image search since the mid-nineties to produce consistent compact sketches which provide good estimates of pairwise similarity of corpus items, computing the sketch with reference only to a single item.  Min-wise hashing computes a sketch for estimating the Jaccard (scaled L1) similarity effectively; SimHash~\cite{Charikar02}, and related techniques compute sketches for estimating the angular separation (in L2)  of arbitrary vectors. Both of these have been widely used in deployed commercial search engines to allow the search result pages to suppress reporting the near-duplicate pages which would otherwise often dominate the search results.

SimHash, by its nature, allows vector coordinates to be arbitrary real numbers, and weights the random projections accordingly. While min-wise sampling is designed for unweighted sets it can also be used for non-negative integer weights. For this one simply replaces any element $e$ of weight $w$ by $w$ elements $e_1,\ldots,e_w$ of unit weight. This reduction however leads to the running time of computing one consistent sample to be proportional to the sum of all weights. More recent papers~\cite{GollapudiP,ManasseMT,Ioffe10} removed the integer weight restriction, and showed how to produce samples using a constant number of hash evaluations for any element, independent of its weight. 

All these sampling techniques as described typically result in a Boolean random variable whose expectation is related to the similarity. One typically applies the same procedure repeatedly with independent randomness to produce a sketch consisting of hundreds to samples, to get  independent Boolean estimates that can be averaged. In this work, we will be concerned with designing a faster estimation scheme for Jaccard similarity.

A beautiful idea of Li and K\"onig~\cite{LiK11,li2010b}, known as $b$-bit min-wise hashing, helps reduce the size of a sketch by storing only a $b$-bit hash per sample. While this results in some ``accidental'' hash collisions, this can be remedied by taking into account the effect of these collisions and taking a larger number of samples. This gives more compact sketches but may require a longer time for sketch generation.

For the unweighted case Li, Owen and Zhang~\cite{LiOZ12} showed how to drastically speed up the computation of sketches by computing 200, say, (near-)independent samples in one shot using only a constant number of hash function evaluations per element, instead of computing each sample one-by-one. Unfortunately however, this ``one permutation''-technique does not easily extend to weighted sampling.

In this paper, we bring this level of performance to weighted sampling, producing an algorithm which produces a sketch in time proportional to the number of positive-weight elements in the item. We do this by picking two or more scales and converting the weighted set into an unweighted one by randomized rounding. This sampling step introduces a negligible error and bias and leads to a small unweighted set on which any unweighted sketching technique can be applied. The size of this set is a tunable parameter which can be beneficial for the subsequently used unweighted sketching step. We apply this new algorithm, and the older ones, in a variety of settings to compare the variance in accuracy of approximation. These improvements come at a marginal cost. Our algorithm takes as input an interestingness threshold $\alpha$, say $\alpha = \frac{1}{2}$, such that similarities smaller than $\alpha$ are considered uninteresting. Given two weighted sets, it either returns an accurate estimate of the similarity, or correctly declares that the similarity is below $\alpha$. Since in applications, one is not usually interested in estimating similarity when it is small, we believe that this is an acceptable tradeoff. 

\section{Background}
Given two finite sets $S$ and $T$ from a universe $U$, the {\em Jaccard similarity} of $S$ and $T$ is defined as:
\begin{align*}
jacc(S,T) = \frac{|S \cap T|}{|S \cup T|}.
\end{align*}

A weighted set associates a positive real weight to each element in it. Thus a weighted set is defined by a map $w : U \rightarrow \Re_+$, with the weight of the elements outside the set defined as 0. We denote the support of $w$ by $supp(w) = \{a \in U: w(a) >0\}$. An unweighted set is then the special case where the weight is equal to one on all of its support.

This definition of Jaccard similarity can be extended to weighted sets in a natural way.
Given two mappings $w_S$ and $w_T$ with supports $S$ and $T$ respectively, 
their {\em weighted Jaccard similarity} $jacc(W,V)$ is defined as
\begin{align*}
\frac{\sum_{a \in S \cap T} \min(w_S(a), w_T(a))}{\sum_{a \in S \cup T} \max(w_S(a), w_T(a))} = \frac{| \min(w_S,w_T) |_1}{| \max(w_S,w_T) |_1}.
\end{align*}

Given a weighted set $w$, we denote by $n(w)$ the support size $|supp(w)|$, and by $W(w)$ the total weight $|w|_1$. When the weighted set if clear from context, we will simply use $n$ and $W$ to denote these quantities. We will be concerning ourselves with fast algorithms for creating a short sketch of a weighted set from which we can quickly estimate the Jaccard similarity between two sets.

Empirical observations place Jaccard similarity below $\approx$ 0.7 as probably not near-duplicates, and above $\approx$ 0.95 as likely near- or exact-duplicates. We will be concerned with fast and accurate sketching techniques for Jaccard similarity. 

Sampling techniques which pick without replacement can closely approximate Jaccard: the oldest such, working in a limited setting where weights are integers (or integer multiples of some fixed base constant) replaces item $a$ with weight $w(a)$ by new items $(a,1), \ldots, (a,w(a))$, all of weight 1.  By taking a hash function $h$, and computing it multiple times for each input item, we can map each pair of item, sample number to a positive real number.  To select the $j^{th}$ sample, consider ${h(x, j)}$ for all items $x$, and choose the pair producing the numerically smallest hash value.  This leads to the generation of $k$ samples in time $O(Wk)$.

Manasse, McSherry and Talwar~\cite{ManasseMT} extended this classic scheme to the case of arbitrary non-negative weights, using the ``active index'' idea of Gollapudi and Panigrahy~\cite{GollapudiP}. Their scheme had the additional advantage that only expected constant time per element is required, independent of its weight, which leads to a total expected run-time of $O(nk)$ for generating $k$ samples, independent of $W$. Ioffe~\cite{Ioffe10} improved this to worst-case constant time, by carefully analyzing the resulting distributions, reducing the per input per sample cost to choosing five random uniform values in the range between zero and one. This gives a worst case run-time of $O(nk)$.

In a deployed implementation, the principal costs for this per element are: (a) Seeding the pseudo-random generator with the input, (b) computing roughly 200 sets of 5 random values in $[0,1]$, and manipulating them to compute a 200 hash values, and (c)   comparing these hash values to a vector of the 200 smallest values discovered to date, and replacing that value if smaller.

Li, Owen and Zhang~\cite{LiOZ12}, using a technique first explored by Flajolet and Martin~\cite{FlajoletM85} compute a single hash value for each element, as well as a random sample number; the hash value then contends for smallest only among those values with equal sample number. This approach does not seem to extend to weighted sampling, because an item with very large weight may need to contend for multiple samples in order to sufficiently influence the predicted similarity. Another deficiency of this approach is that when sets are small, the accuracy of the estimator suffers. Recent works by Shrivastava and Li~\cite{ShrivastavaL14a,ShrivastavaL14b} attempt to address the later concern. 

More recently Li and K\"onig~\cite{LiK11,li2010b} found it effective to instead store up to 256 smallest values, but store only 1 or 2 bits derived randomly from the smallest values.  Accidental collisions will happen a quarter or a half the time, but we can get equivalent power for estimating the Jaccard value by computing enough extra samples to account for the matches that occur due to insufficient length of the recorded value.  For 2 bit samples, with 136 samples, we expect 34 to match randomly, leaving us with 102 samples that match with probability equal to the Jaccard value; na\"ive Chernoff bounds allow us to conclude that estimates of the true value will be accurate to within 0.1. For 1 bit samples, drawing 200 results in 100 accurate samples; the storage space required without these modifications is on the order of 800 bytes. The 2 bit variant takes 34 bytes to hold (with 800 bytes needed in memory prior to the 2 bit reduction, since the minimization step needs to done accurately), while the 1 bit variant needs 25 bytes; a further variant instead computes 400 results, reduces each to 1 bit, and then computes the exclusive-or of pairs of bits to reduce back to 200 bits, each of which will match randomly half the time; this gives us an estimator for $p^2$, rather than $p$, which we can then take the square root of.

In our new algorithm, we seek to gain for the weighted case the space efficiency of~\cite{LiK11}, while producing a sample set as efficiently as~\cite{LiOZ12} do for the unweighted case.

\subsection*{Other Related Work}
The Jaccard similarity is an $\ell_1$ version of similarity between two objects. The cosine similarity is an $\ell_2$ notion that has also been used, and SimHash gives a simple sketching scheme for it. Henzinger~\cite{Henzinger06} performed an in-depth comparison of the then state-of-the-art algorithms for MinHash and SimHash. More recently, Shrivastava and Li~\cite{ShrivastavaL14c} compare these two approaches in some other applications. 
While as stated, SimHash requires time $O(nk)$ to generate $k$ samples, recent work on fast Johnson-Lindenstrauss lemma~\cite{AilonC09,AilonL09,DasguptaKS10} may be viewed as giving faster algorithms for SimHash.

\if 0
\section{Algorithm rationale and design}

Because of the inaccuracies inherent in allowing each input to contend for only one sample (we could weight the sample values, but the requisite weighting cannot be computed correctly in isolation for an item), we choose instead to mimic some aspects of the original technique of repeating input elements to contend for multiple samples. We do not scale arbitrarily; two sketches are only comparable when the degree of repetition matches.  Accordingly, we take advantage of the typical applications of similarity: items with small Jaccard similarity are not interesting.  Suppose we decide that detecting Jaccard less than, e.g., one half is uninteresting. One simple observation from the formula for $J(U, V)$ is that it must be less than $|| U ||_1 / || V ||_1$: the numerator increases, since the minimum of two things is smaller than either one; the denominator decreases, since the maximum of two things is larger than either.  Correspondingly, if two inputs differ in norm by more than a factor of $2$, the Jaccard similarity of the two cannot be as large as $0.5$. In this case we might be content with giving an answer with lower precision or even just report that the Jaccard distance is small. Our algorithm defines a cutoff factor $\alpha$ such that comparing two sketches with Jaccard similarity larger than $\alpha$ leads to very precise answers while evaluating two sketches with similarity less than $\alpha$ might lead to the simplistic answer that $J < \alpha$. We keep $\alpha = 0.5$ in mind as a typical value for this cutoff. Other tunable parameters in our algorithm are the number of samples $k$ we want for the comparison of two $\alpha$-close sketches, which governs the precision, the number of scales $t$, and the minimum expected number of unweighted items $L$ competing for each sample space. The later parameter is useful to use unweighted sampling schemes in the regime they are most efficient. As typical values for these parameters are $k = 256$ (in particular if the more space efficient $b$-bit hashing \cite{} is used a larger than usual amount of samples are required), $L = 5$ and for simplicity we go with $t=3$. 


With these parameters set we consider the powers of $\alpha^{-1}$ starting upwards and downwards from 1.  Given two items whose L1 norms differ by at most a factor of $\alpha$, the one norms of these two items either lie in the same range (i.e., in $[2^k, 2^{k+1})$, for $\alpha=2$, $k = \floor{\log_2 W_1} = \floor{\log_2 W_2}$) or in adjacent ranges (e.g., $W_1 \in [2^k, 2^{k+1})$ and $W_2 \in [2^{k+1}, 2^{k+2})$). We scale the weights of all elements of an item by a power of $\alpha^{-1}$ so that the total falls into the range $[\frac{Lk}{t-1},\alpha^{-1}\frac{Lk}{t-1})$. Similarly by multiplying by extra $\alpha^{-1}$ factors we consider the $t-1$ scales above it. For our example values of selecting $k=256$ samples with $L=5$, $t=3$ and $\alpha = 2$ we first scale the weights to lie in $[640;1280)$ and then scale by a factor of 2 and 4 to lie in $[1280;2560)$ and $[2560;5120)$. 

Next, for each scale we eliminate the fractional parts using randomized rounding, preserving the expected norm: an element with weight $w$ consisting of a integer part $j = \floor{w}$ and a fractional part $f = w - j \in [0,1)$ turns into an $j$ unit weight elements with probability $f$, and into $j+1$ unit weight elements with probability $1-f$. This requires one hash function evaluation per element. Furthermore, since the error introduced in this step is negligible compared to the average sampling error one can reuse the same hash value for the rounding step of all scales. As we do this, we name the pieces for future randomization by the element name, and the associated integer of the piece. This completes the reduction to the unweighted case. 

For each scale we now use the any unweighted sampling scheme to create a sketch. In particular, we can use the single-hash function technique of Li to assign each unit element to a bin in the range of the desired number of samples, and to assign a pseudo-random hash value in the range [0, 1], retaining only the smallest in each bin. Because the expected number of unit pieces surviving randomized rounding will equal the scaled one-norm, we with high probability need at most around $5120 + 2560 + 1280 < 9000$ random values to complete this and for a typical item we expect around $6400$. We expect each potential sample to receive between $L$ and $\alpha L$ unit elements in the first scale, between $\alpha L$ and $\alpha^2 L$ in the second scale and between $\alpha^2 L$ and $\alpha^3 L$ in the last. Correspondingly, we expect that a sample will rarely receive no elements; the probability of a given sample ending with no elements when $L$ elements are used per sample is $(1 - 1/k)^{Lk}$. For our values this corresponds to a probability between $0.7\%$ and $0.004\%$ for the first scale, a probability between $0.004\%$ and one in a billion for the second scale and between $10^{-7}$ and $10{-14}$ for the last scale. Since this happens rarely, we choose to ignore this, and assign a value of zero to a bin when no sample is selected. This allows us to store the sketch compactly, without using any extra space to record the emptiness of some bin, and also simplifies the comparison of two sketches; alternatively we could, instead use a byte to indicate the position of an empty bin, if only one exists, with two special values indicating the expected case of all bins non-empty, and a second special value indicating that more than one bin ended empty.

The precision of this scheme for $t = 3$ can be evaluated as follows. For any items whose weight is within a factor of $\alpha$ share between two and three scales leading to between $k$ and $1.5k$ samples and therefore a precision which is at least as large as desired. For any items whose weight is between $\alpha$ and $\alpha^2$ apart the number of samples is between $0.5k$ and $k$. Even though these items pairs are guaranteed to have a Jaccard similarity below our cutoff value $\alpha$ we still get at least half of our minimum required precision on those (and sometimes even full precision). For item pairs that are at least a factor of $\alpha^2$ apart in weight no shared scale might exist and we simply report that $J < W_1/W_2$ which is at least as strong as $J < \alpha^2$. In our example we get full (or $1.5$-fold) precision on any pairs which are within Jaccard similarity $0.5$ and still get at least half-precision answers for pairs with Jaccard similarity up to $0.25$. The space corresponds to $k \frac{t}{t-1}$ samples which with our example parameters corresponds to $256 \cdot \frac{3}{2} = 384$ samples which is $96$ byte if a $2$-bit sampling scheme is used. This is the same size as a $2$-bit scheme with $k=256$ samples if a bitmask for empty bins is stored. 

The number of hash function evaluations of our scheme for an item with $n$ non-zero weights is $n$ evaluations for the rounding and in the worst case $k L \sum_{i=1}^t \alpha^i = k L \frac{\alpha^t - 1}{\alpha - 1}$ unweighted elements for which a sample and priority value needs to be created. The later quantity is typically chosen to be less than $n$ which leads overall to around $2n$ to $3n$ hash function evaluations. In contrast the best weighted sampling technique so far required $5$ hash function evaluations per element per sample for a total of $5n \cdot k$ or in our example $1028 n$ hash function evaluations.

\begin{algorithm}[htb!]
\caption{ComputeSketch($\vec{I},k,\alpha,L,t$)}
\begin{algorithmic}[1]

\State scale = $\ceil{\log_{1/\alpha} \frac{kL}{||\vec{I}||_1}}$
\State $\vec{I}_1 = \alpha^{-s} \vec{I}$   \Comment{Scale}
\For {$i = 2$ to $t$}
	\State $\vec{I}_i = \alpha^{-s} \vec{I}_{i-1}$
\EndFor

\Statex 

\For {$i = 1$ to $t$}       \Comment{Round}
	\State $S_i = \emptyset$
	\For {$j$ with $I(j) > 0$}
		\State $full_j = \floor{I_1(j)}$
		\If {$rand(i,j,full_j) < I_i(j) - full_j$}
			\State $full_j = full_j + 1$
		\EndIf
		\State $S_i = S_i \cup \bigcup_{p=1}^{full_j} \{(j,p)\}$
	\EndFor
\EndFor

\Statex 

\For {$i = 1$ to $t$}       \Comment{Unweighted Sketching}
	\State Compute unweighted Sketch $s_i$ of $S_i$ with $\frac{k}{t-1}$ samples
\EndFor
	
\State Output sketch = $(scale,s_1,\ldots,s_t)$

\end{algorithmic}
\label{alg:hmt}
\end{algorithm}

\fi

\section{Algorithm rationale and design}

A sketching based similarity estimation scheme consists of two subroutines:
\begin{itemize}
\item A sketching algorithm $Sketch$ that takes as input a single (possibly) weighted set (and usually, a common random seed) and returns a sketch, and 
\item an estimating algorithm $Estimate$ that takes as input two sketches generated by the sketching and returns an estimate of the Jaccard similarity.
\end{itemize}

The property one wants from this pair of algorithms is that for any pair of weighted sets $w_1$ and $w_2$, the estimate $Estimate(Sketch(w_1,r),Sketch(w_2,r))$ is ``close'' to the true Jaccard similarity, with high probability over the randomness $r$. Moreover, we want the run-time and the output size of the Sketch algorithm to be small.

We will present our algorithm in two parts. We first describe a reduction that given a weighted set $w$ and a random seed $r$ outputs an {\em unweighted} set $ReduceToUnwtd(w,r)$ such that:
\begin{description}
\item{(a)} the expected size of the unweighted set is $|w|_1$, and
\item{(b)} given any two weighted sets $w_1$ and $w_2$, the resulting unweighted sets $S_1 = ReduceToUnwtd(w_1,r)$ and $S_2 = ReduceToUnwtd(w_2,r)$ satisfy the property that $jacc(S_1,S_2)$ is approximately $jacc(w_1,w_2)$ with high probability, as long as $|w_1|_1$ and $|w_2|_1$ are large enough. 
\end{description}
We will formalize these statements in the next section. We will then describe how such a reduction can be used along with an unweighted similarity estimation scheme to generate fast and small sketches for weighted sets. In this second part, we will assume that we are given a threshold $\alpha$ such that similarities smaller than $\alpha$ need not be estimated accurately.

\subsection{Weighted to Unweighted Reduction}

In this section, we describe a simple randomized reduction that transforms any weighted set to an unweighted one, such that the Jaccard similarity between sets is approximately preserved. Given a weighted set $w$ from a universe $U$, the reduction produces an unweighted set $S \subseteq U \times \N$. We will assume access to a hash function $h$ that takes a random seed $r$ and a pair $(a,i) \in U \times \N$, and returns a real number in $[0,1)$. We assume for the proofs, that for a random $r$, the hash value $h(r,a,i)$ is uniform in $[0,1)$ and independent of $h(r,a',i')$ for any $(a',i')$ different from $(a,i)$. In practice, we will only need a small expected number of bits of this $h$, and using a pseudorandom generator will suffice.

The reduction is a simple randomized rounding scheme. Consider an element $a$ with weight $w(a)$. We write $w(a)= j_a + f_a$, where $j_a= \lfloor w(a) \rfloor$ is the integer part of $w(a)$ and $f_a \in [0,1)$ is the fractional part. We add to $S$ an element $(a,i)$ for $i=1,\ldots j_a$. Additionally, we add $(a,j(a)+1)$ with probability exactly $f_a$; we do this by computing a hash $h(r,a,j_a)$ and adding $(a,j_a+1)$ to $S$ if and only if $h(r,a,j_a) < f_a$. Using the same hash function seeded with the element $a$ ensures consistency in our rounding, which is crucial for (approximately) preserving Jaccard similarity. The resulting {\em ReduceToUnwtd} algorithm looks as follows:

\begin{algorithm}[htb!]
\caption{ReduceToUnwtd($w$,$r$)}
\begin{algorithmic}[1]

\State $S = \emptyset$
\For {$a$ with $w(a) > 0$}
		\State $j_a = \floor{w(a)}$
		\If {$h(r,a,j_a) < w(a) - j_a$}
			\State $j_a = j_a + 1$
		\EndIf
		\State $S = S \cup \bigcup_{p=1}^{j_a} \{(a,p)\}$
	\EndFor
\State Output $S$

\end{algorithmic}
\label{alg:reduce}
\end{algorithm}

This {\em ReduceToUnwtd} algorithm furthermore provides the following guarantees, whose proof we defer to the next section.

\begin{theorem}
\label{thm:reduction}
Let $w_1$ and $w_2$ be weighted sets and let $S_1 = ReduceToUnwtd(w_1,r)$ and $S_2 = ReduceToUnwtd(w_2,r)$ for a random seed $r$. Then for each $i=1,2$ and any $\delta > 0$,
\begin{enumerate}
\item{(Size Expectation)} $\Ex[|S_i|] = |w_i|_1$.
\item{(Size Tail)} $\Pr[\big| |S_i| - |w_i|_1 \big|  \geq 3\sqrt{|w_i|_1 \ln \frac{2}{\delta}}] \leq \delta$.
\end{enumerate}
Further let $W = \max(|w_1|_1, |w_2|_1)$. Then 
\begin{enumerate}
\setcounter{enumi}{2}
\item{(Bias)} $\big|\Ex[jacc(S_1,S_2)] - jacc(w_1,w_2)\big| \leq \frac{1}{W-1}$.
\item{(Tail)} $\Pr[|jacc(S_1,S_2) - jacc(w_1,w_2)| \geq \sqrt{\frac{27\ln \frac{4}{\delta}}{W}}] \leq \delta$.
\end{enumerate}
\end{theorem}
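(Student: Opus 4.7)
I would tackle the four claims in the order listed, since the moment calculations for (3) and (4) build on those of (1).

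Parts (1) and (2) concern a single $S_i$. For (1), each element $a \in \mathrm{supp}(w_i)$ contributes $\lfloor w_i(a)\rfloor + f_a$ to $|S_i|$ in expectation, where $f_a = w_i(a) - \lfloor w_i(a)\rfloor$, because $h(r,a,\lfloor w_i(a)\rfloor)$ is uniform on $[0,1)$; summing gives $|w_i|_1$. For (2), $|S_i| - \sum_a \lfloor w_i(a) \rfloor$ is a sum of independent $\mathrm{Bernoulli}(f_a)$ random variables of total mean $\sum_a f_a \leq |w_i|_1$, so a standard multiplicative Chernoff bound applied with deviation $t = 3\sqrt{|w_i|_1 \ln(2/\delta)}$ yields the stated inequality.

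The core of (3) is a per-element expectation computation for $X = |S_1 \cap S_2|$ and $Y = |S_1 \cup S_2|$, organized by whether $\lfloor w_1(a)\rfloor$ and $\lfloor w_2(a)\rfloor$ agree. When they agree at some integer $j$, both reductions query the same hash $h(r,a,j)$; writing $f_i = w_i(a)-j$, the contribution of $a$ to the intersection is $j + \mathbb{I}[h(r,a,j) < \min(f_1,f_2)]$ and to the union is $j + \mathbb{I}[h(r,a,j) < \max(f_1,f_2)]$, with expected values $\min(w_1(a),w_2(a))$ and $\max(w_1(a),w_2(a))$. When the floors differ, the copies of $a$ in the smaller set are deterministically contained in those of the larger set, so the intersection and union contributions reduce to the number of $a$-copies in the smaller and larger sets, with the same $\min/\max$ expectations. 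Summing over $a$ gives $\Ex[X] = |\min(w_1,w_2)|_1$ and $\Ex[Y] = |\max(w_1,w_2)|_1$, so $\Ex[X]/\Ex[Y] = jacc(w_1,w_2)$ exactly.

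The delicate step -- and the main obstacle -- is passing from $\Ex[X]/\Ex[Y] = jacc(w_1,w_2)$ to $|\Ex[X/Y] - jacc(w_1,w_2)| \leq 1/(W-1)$, since the mean of a ratio is not the ratio of means. Setting $\mu = \Ex[X]/\Ex[Y]$ and $U = X - \mu Y$, one has $\Ex[U]=0$ and the bias rewrites as $\mathrm{Cov}(U, 1/Y) = \mathrm{Cov}(X, 1/Y) - \mu\,\mathrm{Cov}(Y, 1/Y)$. I would bound each covariance by Cauchy--Schwarz, using the per-element variance estimate $\mathrm{Var}(Y) = \sum_a \mathrm{Var}(Y_a) \leq \sum_a f_a \leq \Ex[Y]$ (and the analogous $\mathrm{Var}(X) \leq \Ex[Y]$, obtained from the same case analysis with $\min$ in place of $\max$), together with a Taylor expansion of $1/Y$ around $\Ex[Y]$ giving $\mathrm{Var}(1/Y) \lesssim \mathrm{Var}(Y)/\Ex[Y]^4$, which is controlled since $Y$ is integer-valued and $\Ex[Y]\geq W$. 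This produces a bias of order $1/\Ex[Y]$; careful constant-tracking, exploiting integrality to rule out pathological small values of $Y$, is what is needed to match $1/(W-1)$ exactly.

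Part (4) then follows by combining Chernoff bounds on $X$ and $Y$ separately -- each a sum of independent per-element contributions of variance at most $B := \Ex[Y]$. On the $(1-\delta)$-event where $|X-A|$ and $|Y-B|$ are both at most $C\sqrt{B\ln(4/\delta)}$, the identity $X/Y - A/B = (B(X-A) - A(Y-B))/(BY)$ together with $A\leq B$ and $Y\geq B/2$ gives $|jacc(S_1,S_2) - jacc(w_1,w_2)|$ of order $\sqrt{\ln(1/\delta)/W}$, and tracking constants recovers the claimed $\sqrt{27\ln(4/\delta)/W}$.
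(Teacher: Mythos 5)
Your parts (1), (2) and (4) track the paper's proof: linearity of expectation, a Chernoff bound on the sum of independent Bernoulli roundings, and separate concentration of the intersection and union counts combined on the common good event. Your per-element analysis for the means in part (3) --- consistent rounding through the shared hash when the floors agree, deterministic containment when they differ --- is also exactly the paper's reduction to $w_1=w_{\min}$, $w_2=w_{\max}$.

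The gap is in how you pass from $\Ex[X]/\Ex[Y]=jacc(w_1,w_2)$ to the bias bound of part (3). Writing the bias as $\mathrm{Cov}(X-\mu Y,\,1/Y)$ and invoking Cauchy--Schwarz requires a bound on $\mathrm{Var}(1/Y)$, and the step $\mathrm{Var}(1/Y)\lesssim \mathrm{Var}(Y)/\Ex[Y]^4$ is a delta-method heuristic, not an inequality: $1/Y$ is unbounded near $Y=0$ (and undefined at $Y=0$, an event of positive probability when all weights are fractional), so one must separately control the lower tail of $Y$; integrality alone does not do this, and patching it in produces additive error terms from the bad event. Even granting the heuristic, Cauchy--Schwarz yields $\sqrt{\mathrm{Var}(X-\mu Y)\cdot\mathrm{Var}(1/Y)}\approx\sqrt{c_1c_2}/W$ where each variance bound already loses a constant, so there is no reason to expect the product to collapse to the clean $\frac{1}{W-1}$. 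The paper avoids all of this with an exact argument: condition on the intersection indicators $X_e$, apply the negative-moment inequality $\Ex[(A+\sum_i B_i)^{-1}]\le 1/(A+\sum_i\Ex[B_i]-1)$ (Lemma~\ref{lem:expinverse}, proved via the representation $\Ex[1/Z]=\int_0^1\Ex[u^{Z-1}]\,\ud u$), and then use concavity in each $X_e$ together with Jensen's inequality to replace each $X_e$ by its mean. This sandwiches both $\Ex[jacc(S_1,S_2)]$ and $jacc(w_1,w_2)$ in the interval $[\frac{\mu_x-1}{\mu_x+\mu_y-1},\frac{\mu_x}{\mu_x+\mu_y-1}]$, whose width is exactly $\frac{1}{W-1}$. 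To salvage your route you would need a rigorous second-order expansion of $\Ex[1/Y]$ with explicit lower-tail control and should expect worse constants; the conditional-expectation argument is what actually delivers the stated bound.
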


A variant of this algorithm will be useful when we want to even further improve on the number of hash computations that need to be performed. In particular, we propose the algorithm {\em ReduceToUnwtdDep} which uses $h(r,a)$ instead $h(r,a,j_a)$ to determine whether $(a,j_a+1)$ is added. Except for this small change in Step 4 the algorithm is identical to {\em ReduceToUnwtd}. 

\begin{algorithm}[htb!]
\caption{ReduceToUnwtdDep($w$,$r$)}
\begin{algorithmic}[1]

\State $S = \emptyset$
\For {$a$ with $w(a) > 0$}
		\State $j_a = \floor{w(a)}$
		\If {$h(r,a) < w(a) - j_a$}
			\State $j_a = j_a + 1$
		\EndIf
		\State $S = S \cup \bigcup_{p=1}^{j_a} \{(a,p)\}$
	\EndFor
\State Output $S$

\end{algorithmic}
\label{alg:reducedep}
\end{algorithm}

This slight change in {\em ReduceToUnwtdDep} compared to {\em ReduceToUnwtd} introduces some dependencies in the rounding. For example, if $w_a=1.5$ and $w'_a=2.5$, then the outputs of {\em ReduceToUnwtdDep} on these two weight functions $S$ and  $S'$ will have the events $(a,2) \in S$ perfectly correlated with the event $(a,3) \in S'$, whereas in the original {\em ReduceToUnwtd} algorithm these events are independent. Nevertheless, the following theorem, which is also proved in the next section, shows that the Jaccard similarity of $S$ and $S'$ is still close to that between $w$ and $w'$.

\begin{theorem}
\label{thm:reductiondep}
Let $w_1$ and $w_2$ be weighted sets and let $S_1 = ReduceToUnwtd(w_1,r)$ and $S_2 = ReduceToUnwtd(w_2,r)$ for a random seed $r$. Then for each $i=1,2$ and any $\delta > 0$,
\begin{enumerate}
\item{(Size Expectation)} $\Ex[|S_i|] = |w_i|_1$.
\item{(Size Tail)} $\Pr[\big| |S_i| - |w_i|_1 \big|  \geq 3\sqrt{|w_i|_1 \ln \frac{2}{\delta}}] \leq \delta$.
\end{enumerate}
Further let $W = \max(|w_1|_1, |w_2|_1)$. Then 
\begin{enumerate}
\setcounter{enumi}{2}
\item{(Bias)} $\big|\Ex[jacc(S_1,S_2)] - jacc(w_1,w_2)\big| \leq \frac{1}{W-1}$.
\item{(Tail)} $\Pr[|jacc(S_1,S_2) - jacc(w_1,w_2)| \geq \sqrt{\frac{27\ln \frac{4}{\delta}}{W}}] \leq \delta$.
\end{enumerate}
\end{theorem}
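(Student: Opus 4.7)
The plan is to leverage the fact that \lref{thm:reductiondep} has an identical statement to \lref{thm:reduction}, and to show that the proof of the latter transfers essentially verbatim because the relevant marginal distributions of the sizes $|S_i|$, the intersection $N := |S_1 \cap S_2|$, and the union $D := |S_1 \cup S_2|$ are unchanged by switching from $h(r,a,j_a)$ to $h(r,a)$.

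For claims (1) and (2), I would note that these depend only on the marginal distribution of a single $S_i$. In both algorithms, each element $a \in supp(w_i)$ independently contributes $\floor{w_i(a)}$ deterministic copies plus one extra copy with probability exactly $f_{a,i} := w_i(a) - \floor{w_i(a)}$, because the hash used (be it $h(r,a)$ or $h(r,a,j_a)$) is uniform in $[0,1)$ either way. Hence the marginal distribution of $|S_i|$ is the same constant-plus-sum-of-independent-Bernoullis analyzed for \lref{thm:reduction}, and the size expectation and tail bounds follow immediately.

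For claims (3) and (4), I would first establish the per-element expectations. Fix an element $a$, let $H_a := h(r,a)$, and assume WLOG $w_1(a) \leq w_2(a)$. A short case analysis (integer parts equal vs.\ unequal) shows that $X^{(1)}_a \leq X^{(2)}_a$ holds deterministically, where $X^{(i)}_a := \floor{w_i(a)} + \mathbf{1}[H_a < f_{a,i}]$. Consequently $\min(X^{(1)}_a, X^{(2)}_a) = X^{(1)}_a$ and $\max(X^{(1)}_a, X^{(2)}_a) = X^{(2)}_a$, with expectations $\min(w_1(a), w_2(a))$ and $\max(w_1(a), w_2(a))$ and marginal variances $f_{a,\min}(1-f_{a,\min})$ and $f_{a,\max}(1-f_{a,\max})$ respectively. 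Summing over $a$ and using independence across $a$ yields $\Ex[N] = \sum_a \min(w_1(a), w_2(a))$, $\Ex[D] = \sum_a \max(w_1(a), w_2(a))$, and the same $\operatorname{Var}(N)$ and $\operatorname{Var}(D)$ as in \lref{thm:reduction}.

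The hard part will be claim (3), the $\tfrac{1}{W-1}$ bias bound. I would use the identity $\Ex[N/D] - \Ex[N]/\Ex[D] = -\Ex[(N/D)(D - \Ex[D])]/\Ex[D]$, combined with $N/D \in [0,1]$, the mean-zero cancellation of $D - \Ex[D]$, and a second-order Taylor expansion of $1/D$ around $\Ex[D] \geq W$, to obtain the claimed bound. For claim (4), Bernstein concentration applied to $N$ and $D$ separately and then combined controls $|N/D - \Ex[N]/\Ex[D]|$. Since every input to these arguments --- the marginal laws of $N$ and $D$, their marginal variances, and the deterministic inequality $N \leq D$ --- matches the corresponding quantities in \lref{thm:reduction}, the analysis carries over without change. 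The one quantity that genuinely does differ between the two algorithms is the cross-covariance $\operatorname{Cov}(N,D)$ (the shared hash produces a monotone coupling that correlates $N$ and $D$ more tightly than independent rounding would), but this covariance never enters the bias or tail arguments, so no modification is required.
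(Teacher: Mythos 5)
Your treatment of parts (1), (2) and (4) is correct and is exactly the paper's: for each fixed $i$ the indicators $Z_i(a,j)$ are independent Bernoullis (only one per element $a$ is non-deterministic), so the marginal laws of $|S_i|$, of $N=|S_1\cap S_2|$ and of $D=|S_1\cup S_2|$ are unchanged, and the Chernoff-based arguments transfer verbatim.

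The gap is in part (3). The bias $\Ex[N/D]-\Ex[N]/\Ex[D]$ is a functional of the \emph{joint} law of $(N,D)$, not of the marginals, and the joint law genuinely changes: for the paper's own example $w_1(a)=1.5$, $w_2(a)=2.5$, the pair of per-element counts is uniform on $\{1,2\}\times\{2,3\}$ under \emph{ReduceToUnwtd} but supported only on $\{(1,2),(2,3)\}$ under \emph{ReduceToUnwtdDep}. Your own identity makes the dependence explicit: the term $\Ex[(N/D)(D-\Ex[D])]$ is a covariance of $N/D$ with $D$, so the claim that ``the cross-covariance never enters the bias argument'' cannot stand. Worse, the tools you invoke to bound it ($N/D\in[0,1]$ plus mean-zero cancellation) only yield $|\Ex[(N/D)(D-\Ex[D])]|\le \tfrac12\Ex\big[|D-\Ex[D]|\big]\le\tfrac12\sqrt{\mathrm{Var}(D)}$, i.e.\ a bias of order $1/\sqrt{W}$ --- the order of the standard deviation, not of the claimed $\tfrac{1}{W-1}$. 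A second-order expansion of $1/D$ can recover order $1/W$, but only after you explicitly control $\mathrm{Cov}(N,D)$ --- precisely the quantity the shared hash perturbs --- and even then the specific constant $\tfrac{1}{W-1}$ is not immediate. This is exactly why the paper does not reuse Theorem~\ref{thm:biastech} but proves the separate Theorem~\ref{thm:biastechdep}: it regroups the rounding indicators into independent triples $(X_e,Y_e,Y'_e)$ that capture the shared-threshold coupling, checks case by case that $\Ex[X_i+Y_i+Y'_i\mid X_i]$ is affine in $X_i$ with slope $\gamma_i\le 1$, and then reruns the concavity/Jensen argument of the independent case. You need an argument of this kind, or a completed covariance computation, for part (3).
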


\subsection{Similarity Estimation Scheme}

We start by observing that the weighted Jaccard similarity is scale invariant: for any real number $\gamma>0$ it holds that $jacc(\gamma w_1, \gamma w_2) = jacc(w_1,w_2)$. For our scheme however different choices of $\gamma$ lead to a different outcome. In particular, our reduction gets more accurate as the total $\ell_1$ weight $w(U)$ increases. On the other hand, the expected size of the unweighted set resulting from the reduction is $w(U)$ so that any unweighted similarity estimation sketch we use gets more inefficient as $w(U)$ increases. We therefore would like to pick $\gamma$ to be as small as possible while keeping the accuracy loss in the reduction small.

A more pressing matter though is the following: if we know the sets $w_1$ and $w_2$, we can carefully pick $\gamma$, but the whole point of the sketch is that it summarizes $w_1$ without knowing which $w_2$ we would want to compare it with. Thus, we will need to decide on one or more scaling factors $\gamma$ for a set $w$ without knowing which other weighted sets we will compare it with.

To describe our scheme, we will introduce a few more parameters. The input parameter $\alpha<1$ is the interestingness threshold, and our scheme may report "$< \alpha$" instead of outputting an estimate if the Jaccard similarity is smaller than $\alpha$. The parameter $k$ will correspond to the final number of comparable samples (or $b$-bit hash values) for a pair of weighted sets of interest. This determines the accuracy of the scheme, which is of the order $\frac{1}{\sqrt{k}}$, the standard deviation of $k$ independent random measurements. Thus to get accuracy about $0.05$ in the estimate of the Jaccard similarity, we will use $k$ to be about $400$. An additional parameter $L$ is the redundancy we want to use when using  an unweighted similarity estimation scheme. Roughly speaking, for generating $k$ samples, we will assume that the unweighted similarity scheme we use works well on sets of size at least $Lk$. For a scheme such as the one we use~\cite{LiK11}, $L$ being a small constant such as $5$ suffices. Finally, a parameter $\beta$ will determine the scaling factors we use, and $t$ will denote the number of scales we use for each weighted set. For simplicity, we first describe the algorithm with $\beta=\alpha$.

We propose to pick for a weighted set $w$, a small number $t$ of ``scales'', where a scale is simply an integer power of $\beta$. We pick the scales in such a way that the scaled $\ell_1$ weight $\beta^i \cdot w(U)$ is in the range $[\frac{Lk}{t-1},\beta^{-1}\frac{Lk}{t-1})$ for the first scale, and in adjacent larger geometric intervals for the remaining $t-1$ scaling factors. This can be achieved by taking the first scaling factor to be $\beta^{-s}$ for $s$ given by $s=\lceil \log_{1/\beta} \frac{Lk}{(t-1)w(U)}\rceil$, and the subsequent scaling factors being given by $s+1,s+2,\ldots,s+t-1$. For each of these scales $s'$, we define the scaled weighted sets $\beta^{-s'}\cdot w$, and apply the reduction to it to derive an unweighted set of size at least $Lk/(t-1)$, and apply an unweighted sketching scheme to derive $\frac{k}{t-1}$ samples.

Observe that if $w_1(U)/w_2(U) \not\in [\alpha,1/\alpha]$, then the Jaccard similarity $jacc(w_1,w_2) < \alpha$. Thus for such sets, we can safely report ``similarity $<$ $\alpha$''. If on the other hand, the ratio $w_1(U)/w_2(U) \in [\alpha,1/\alpha]$, then the first scaling factors $s_1$ and $s_2$ chosen for $w_1$ and $w_2$ are either equal or differ by $1$. In either case, they share at leat $t-1$ scaling factors, and we can use those for estimation of the distance. This gives us $k$ comparable samples, as desired.

More generally, we can pick an integer $\tau \in [1,t]$, and set\footnote{In principle, $\beta$ can be chosen arbitrarily in $[\alpha^{1/\tau},\alpha^{1/\tau+1})$, and a value more conducive to floating point operations may be picked.} $\beta = \alpha^{1/\tau}$, We then use $\frac{k}{t-\tau}$ samples for each scales, and pick our scales starting from $s=\lceil \log_{1/\beta} \frac{Lk}{(t-\tau)w(U)}\rceil$.  It is easy to verify that this choice ensures that for any pair of sets with  $w_1(U)/w_2(U) \in [\alpha,1/\alpha]$, we are guaranteed to find $t$ comparable samples.

We formalize the sketching and the estimating algorithms next. We assume access to a subroutines {\em UnwtdSketch($S,k,r$)} that takes in an unweighted set $S$, parameter $k$ and a seed $r$ and returns a sketch consisting of $k$ samples. In addition, we assume that {\em UnwtdEstimate($sketch,sketch'$)} outputs an estimate of the Jaccard distance based on the sketches. The $b$-bit hashing scheme of~\cite{LiK11} would give a candidate pair of instantiations of these subroutines. In the description below, we assume that the randomness source $r$ can be partitioned into sources $r_i$, $r'_i$, for $t$ different values of $i$.

\begin{algorithm}[htb!]
\caption{ComputeSketch($w,k,\alpha,\tau,L,t,r$)}
\begin{algorithmic}[1]
\State $\beta = \alpha^{\frac{1}{\tau}}$
\State s = $\ceil{\log_{1/\beta} \frac{Lk}{(t-\tau)|w|_1}}$
\State sketch = $\emptyset$
\Statex 

\For {$i = s$ to $s+t-1$}       
	\State $w_i = \beta^{-i} w$   \Comment{Scale}
	\State $S_i = ReduceToUnwtd(w_i,r_i)$ \Comment{Round}
	\State $sk_i = UnwtdSketch(S_i, \frac{k}{t-\tau},r'_i)$ \Comment{Sketch}
	\State Add $(i,sk_i)$ to sketch
\EndFor
	
\State Output sketch
\end{algorithmic}
\label{alg:hmtsketch}
\end{algorithm}

\begin{algorithm}[htb!]
\caption{EstimateJaccard($sketch,sketch'$)}
\begin{algorithmic}[1]

\State Common = $\{(i,sk_i,sk'_i): (i,sk_i) \in sketch $ and $(i,sk'_i) \in sketch' \}$
\If {Common == $\emptyset$}
	\State Output "Similarity $< \alpha$"
\Else
	\For {each $(i,sk_i,sk'_i) \in$ Common}
		\State $sim_i = UnwtdSimilarity(sk_i,sk'_i)$
	\EndFor
	\State Output Average($sim_i$)
\EndIf

\end{algorithmic}
\label{alg:hmtestimate}
\end{algorithm}


We note that as stated, the number of hash computations required in the reduction steps is $tn$. However, using the dependent version $ReduceToUnwtdDep$ of the reduction, this can be reduced to $n$, which may be a substantial saving when $n$ is large.

While our algorithm can be used with any unweighted similarity estimation scheme, using it with the one permutation hashing scheme of~\cite{LiOZ12} will give us unweighted sets of expected size at most $\beta^{-1} \frac{Lk}{t-\tau},\ldots, \beta^{-t} \frac{Lk}{t-\tau}$. The number of hash evaluations in the unweighted sketching scheme is equal to the set size, so that for the typical setting of $\beta=\alpha=0.5$, $t=3$, the total cost is $7Lk$ hash evaluations. The running time is of a similar order. Here $L$ is a small constant such as $4$. Recall that the best previously known weighted scheme required $5nk$ hash evaluations, which we are reducing to $n+7Lk$.

\subsection{Discussion}
In this section, we discuss some finer implementation details, and how they may affect our choices of parameters.
\subsection*{The benefits of tunable unweighted size}
One immediate benefit of the set size being tunable is that we can arrange the parameters so as to ensure that at each of the scales that a set is involved in, the size of the unweighted set resulting from the reduction is at least $\frac{Lk}{t-\tau}$. As discussed earlier, this has the benefit of making empty samples rare enough that they can be ignored. This does not just simplify the sketch comparison but more importantly relieves us from having to store whether or not a sample is empty, leading to noticeable savings in the sketch size.

We remark that even for unweighted sets, the one permutation approach of~\cite{LiOZ12} suffers when sets are small. With many bins being empty the accuracy of the scheme drops drastically. Subsequent works~\cite{ShrivastavaL14a,ShrivastavaL14b} have proposed modifications to address this issue. Nevertheless, the authors still pay for sparseness: the variance initially falls off as $\frac{1}{k}$ as $k$ increases, but flattens out once $k$ becomes much larger than the set size. We note that treating an unweighted set as a weighted one, and applying our approach gives a simple solution to this problem, and results in a $1/k$ fall in variance for arbitrarily large $k$, irrespective of the support size, without paying any penalty in the running time. Thus, even for unweighted sets, the approach proposed in this work is useful.

A more subtle benefit comes from the fact that the size of the unweighted set is {\em at most} $\beta^{-t} \frac{Lk}{t-\tau}$, so that an average sample gets at most $L\beta^{-t}$ items. For typical values, $L=5, t=3, \beta = 0.5$, this is $40$. Thus when computing the hash value to compute the minimum in the bin, we can do with, say a 13 bit hash value. Using these many bits makes it exceedingly unlikely that one of the bins will not have a unambiguous minimum. Thus when generating hash values for $(a,1)\ldots,(a,j)$ for some $a$, we can use $\log_2 \frac{k}{t-\tau}$ bits to generate the bin, 13 bits to figure out the minimum, and an additional $b$ bits to be stored for the winner in the bin. Thus we need $7+13+2=22$ bits per $(a,i)$. Thus we can generate a sequence of pseudorandom bits seeded with $a$, and then break it up into chunks of 22 bits each, using the $i$th chunk for $(a,i)$. In the rare event that we do not get a unique minimum in some bin, we can reseed with $(a,1)$ and generate several additional bits per $i$ sequentially, and so on. Since this is a rare enough event it does not impose a significant cost. Note that in contrast, without such an upper bound on the required number of bits, the bucketing of any large document would lead to a large number of elements landing in the same bin which would require a larger number of bits to identify the minimum. One thus has to either reseed for each $(a,i)$ or make other assumptions on the document size.

\subsection*{The effect of the threshold for interestingness}

The threshold $\alpha$ which determines what values of Jaccard similarity we consider interesting would typically depend on applications. While we presented this work with $\alpha=0.5$, lower or higher similarity values may be preferable in other settings.
When $\alpha$ is close to 1 (say 0.95), then other optimizations may be possible. Indeed, note that for sets that are so similar, the sketches, even for a large values of $b$ would agree in nearly all the locations. Intuitively, each stored values gives us little information as there is at most say $5$ out of $100$ $b$-bit values that are different. One could ameliorate this by compressing the sketch in a careful manner. For example, one can take 10 $b$-bit values and just store their XOR, thus saving a factor of 10. In return, we can now generate a 1000 $b$-bit values instead of 100, but store only the 100 resulting XORs. For similarity more than 0.95, at least half of the 10-bin-blocks would be identical, and thus their XOR would be the same. Since we can account for accidental collision of the $b$-bit values, we can account for them. A careful look at this process shows that we would get an estimate of $Sim^{10} \approx (1-10(1-Sim))$ for $Sim$ close to 1, from which a more accurate estimate of the similarity can be obtained. This idea is not new and has been suggested in Li and K\"onig~\cite{LiK11}, who show that taking $b$ to be $1$ is already better when similarity is at least $0.5$, and show that xoring pairs (the $b= \frac{1}{2}$) case) gives a further improvement for larger similarities. It is natural to pick the appropriate value of $b<1$ when $\alpha$ is close to $1$.

\subsection*{Using more scales than 3}

Recall that with the proposed choice of 3 scales, we get two common scales for sets whose weights are within a factor of $\alpha$ of each other. But even for sets whose weights are within a factor of $\alpha^2$, we have one common scale and thus get a similarity estimate with error commensurate with $k/2$ samples instead of $k$.
By choosing more scales, we get a better trade-off between space and accuracy for every similarity value: we store at many scales, but have fewer samples for each scale. As $t$ increases, a larger fraction of our samples $(1-\frac{1}{t})$ are shared between two sets within weight $\alpha$. And we get a smoother fall-off in accuracy for smaller values of similarity: e.g., even sets with weights within a factor of $\alpha^2$ have $t-2$ scales in common, and thus a $(1-\frac{2}{t})$ of our samples can be used for estimating similarity. By setting $\tau$ to be larger, say $\tau = 3$, and larger $t$ (say 10), we get as much accuracy as the $\tau=1,t=3$ case for similarities around 0.5, but a smoother decay in accuracy for smaller similarities, and in fact a higher accuracy for higher similarities. The only way that we may have to ``pay'' for this is that as the size of the unweighted sets now becomes smaller the error in the reduction step may increase. In our experiments, even for unweighted sets of size 50, the error introduced was only a few percent, and moreover the errors over different scales seemed to largely cancel each other out.

Finally, we note that while the variance of the similarity estimate from different scales varies slightly. For the larger scales, the scaled weight is larger, so that the reduction has smaller variance. While at most a factor of $\frac{1}{L}$ of the variance from the unweighted sketch, this small difference in variance of the estimators from each scales can be taken into account while averaging the estimates from different scales. This would give a small improvement in the variance of the final estimate, at the cost of a slightly more complex estimation algorithm.

\section{Proofs}

In this section, we prove Theorems~\ref{thm:reduction} and~\ref{thm:reductiondep}. \subsection{A Useful Lemma}
We start by stating and proving a useful inequality that relates the expectation of the inverse of a sum of independent $0$-$1$ variables to the inverse of a closely related expectation. This is a special case of a result of \cite{ChaoS72} and we present a simple proof here for completeness.

\begin{lemma}
\label{lem:expinverse}
Let $\{X_i\}_{i=1}^N$ be a sequence of independent Bernoulli random variables with $\E[X_i]=\mu_i$ and let $\mu \stackrel{def}{=} \sum_i \mu_i$. Then for $A\geq 1$,
$$
\frac{1}{A+\mu} \leq \E\left[(A+\sum_{i=1}^N X_i)^{-1}\right]  \leq \frac{1}{A+\mu-1}.
$$
\end{lemma}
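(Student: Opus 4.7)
The lower bound follows immediately from Jensen's inequality: since $x \mapsto 1/x$ is convex on $(0,\infty)$ and $A + S \geq A \geq 1 > 0$ almost surely, one has $\E[1/(A+S)] \geq 1/\E[A+S] = 1/(A+\mu)$.

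For the upper bound, my plan is to use the integral representation $\frac{1}{A+S} = \int_0^1 t^{A+S-1}\,dt$, valid since $A+S \geq 1$. Taking expectations (swap justified by Tonelli, as the integrand is nonnegative) and using $\E[t^S] = \prod_i (1-\mu_i(1-t))$ by independence, I get
$$\E\!\left[\frac{1}{A+S}\right] = \int_0^1 t^{A-1} \prod_i (1-\mu_i(1-t))\,dt.$$
Applying $1-x \leq e^{-x}$ to each factor gives $\prod_i (1-\mu_i(1-t)) \leq e^{-\mu(1-t)}$, and substituting $u = 1-t$ yields
$$\E\!\left[\frac{1}{A+S}\right] \leq \int_0^1 (1-u)^{A-1} e^{-\mu u}\,du =: I(A,\mu).$$
A direct series expansion of $e^{-\mu u}$ identifies $I(A,\mu)$ with $\E_{Y \sim \mathrm{Poi}(\mu)}[1/(A+Y)]$, so the task reduces to proving the same bound in the Poisson case.

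To bound $I(A,\mu) \leq 1/(A+\mu-1)$, I would integrate by parts. For $A > 1$, setting $p = (1-u)^{A-1}$ and $dq = e^{-\mu u}\,du$, the boundary term $(1-u)^{A-1}$ vanishes at $u=1$ and the calculation yields
$$I(A,\mu) = \frac{1}{\mu} - \frac{A-1}{\mu}\, I(A-1,\mu).$$
Now the key move: since $A-1 > 0$, Jensen's inequality applied to the Poisson expectation produces a \emph{lower} bound $I(A-1,\mu) \geq 1/(A-1+\mu)$, and substituting this into the identity (where $I(A-1,\mu)$ appears with a minus sign) gives exactly
$$I(A,\mu) \leq \frac{1}{\mu} - \frac{A-1}{\mu(A-1+\mu)} = \frac{1}{A+\mu-1}.$$
The boundary case $A=1$ is handled by direct evaluation: $I(1,\mu) = (1-e^{-\mu})/\mu \leq 1/\mu$.

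The main conceptual hurdle is realizing that Jensen must be used twice and in opposing directions: once as a standard lower bound on $\E[1/(A+S)]$, and then again as a lower bound on $I(A-1,\mu)$ which, because of the minus sign in the IBP recursion, flips into an upper bound on $I(A,\mu)$. The reduction to the Poisson case via $1-x \leq e^{-x}$ is not automatic either, but once one has written down the integral representation it is the natural tight bound.
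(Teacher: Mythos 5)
Your proof is correct, and up through the bound $\E[1/(A+S)] \leq \int_0^1 t^{A-1}\prod_i(1-\mu_i(1-t))\,dt \leq \int_0^1 t^{A-1}e^{-\mu(1-t)}\,dt$ it coincides exactly with the paper's argument (same Jensen lower bound, same Chao--Strawderman integral representation, same $1-x\leq e^{-x}$ step). The two proofs part ways only at the final integral: the paper also bounds $t^{A-1}=e^{(A-1)\ln t}\leq e^{(A-1)(t-1)}$, so the integrand collapses to $\exp(-(1-t)(\mu+A-1))$ and the integral evaluates in closed form to $\frac{1-e^{-(\mu+A-1)}}{\mu+A-1}<\frac{1}{\mu+A-1}$; you instead recognize the integral as a Poisson negative moment and finish via an integration-by-parts recursion plus a second, sign-reversed application of Jensen. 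Your route is a valid and somewhat more structural detour (it makes the Poisson comparison explicit), but it costs you an extra case split at $A=1$ and a convergence check on $I(A-1,\mu)$ near $u=1$, whereas the paper's single additional exponential bound does the job in one line.
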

\begin{proof}
The first inequality follows by applying Jensen's inequality to the function $\phi(X)=1/(A+X)$, which is convex for $X\geq 0$.

To prove the second inequality, we use a result from~\cite{ChaoS72} who gave a formula for negative moments of random variables. We reproduce the proof of the case we use for completeness. Observe that for every $t,x>0$,
$$
\frac{t^x}{x} = \int_0^t u^{x-1} \ud u.
$$
Setting $t=1$ and taking expectations over random $x$, we get
$$
\E\big[\frac{1}{X}\big] = \int_0^1 \E\big[u^{X-1}\big] \ud u.
$$
For $X= A+ \sum_i X_i$, we upper bound
\begin{align*}
\E\big[u^{X+A-1}\big]
&= u^{A-1}\prod_i \E\big[u^{X_i}\big]\\
&= \exp((A-1)\ln u)\prod_i (1-(1-u)\mu_i)\\
&\leq \exp((A-1)(u-1))\prod_i \exp(-(1-u)\mu_i)\\
&= \exp(-(1-u)(\mu+A-1)).
\end{align*}
We conclude that
\begin{align*}
\E\big[\frac{1}{X}\big] &\leq \exp(-(\mu+A-1))\int_{0}^1 \exp((\mu+A-1) u)\ud u \\
&= \frac{1-\exp(-(\mu+A-1))}{\mu+A-1}<\frac{1}{\mu+A-1}.
\end{align*}
\end{proof}

\subsection{Proof of Theorem~\ref{thm:reduction}}

We set up some notation first. Given weighted sets $w_1$ and $w_2$ , we define $w_{\min} : U \to \Re$ as $$w_{\min}(a)= \min(w_1(a),w_2(a))$$ and similarly define $w_{\max} : U \to \Re$ as $$w_{\max}(a) = \max(w_1(a),w_2(a)).$$  Recall that $S_i = ReduceToUnwtd(w_i,r)$, and let $S_{\min}$ and $S_{\max}$ denote the outcomes $ReduceToUnwtd(w_{\min},r)$ and $ReduceToUnwtd(w_{\max},r)$ respectively.

Our threshold-based rounding has the property that there is no loss of generality in assuming that $w_1=w_{\min}$ and $w_2=w_{\max}$. This is because the Jaccard similarity $jacc(w_1,w_2)$ equals the similarity $jacc(w_{\min},w_{\max})$, and moreover for any value of $r$ we have that $(a,j) \in S_1 \cap S_2$ if and only if $(a,j) \in S_{\min}$, and similarly $(a,j) \in S_1 \cup S_2$ if and only if $(a,j) \in S_{\max}$. Therefore, $jacc(S_1,S_2) = jacc(S_{\min},S_{\max})$ for every value of $r$. For the rest of this proof we will therefore assume that $w_1=w_{\min}$ and $w_2=w_{\max}$.

Let $Z_1 : U \to \{0,1\}$ be the indicator function for the set $S_1$ and similarly define $Z_2$. Note that both $Z_1$ and $Z_2$ are random variables. For a function $f: U \to \Re$, let $f(U)$ denote $\sum_{a \in U} f(a)$.

Part (1) of Theorem~\ref{thm:reduction} is now immediate by linearity of expectation, since for $i=1,2$ we have that $$\Ex[S_i] = \sum_{a \in U} \sum_{j\in \N} \Ex[Z_i(a,j)] = \sum_{a \in U} w_1(a).$$

Part (2) of Theorem~\ref{thm:reduction} follows by a direct application of Chernoff bounds (see e.g.~\cite{DubhashiP09}). 

We now proceed to proving parts (3) and (4) of Theorem~\ref{thm:reduction}. Let us define $w_{\rest} = w_{2} - w_{1}$ and similarly $Z_{\rest} = Z_{2} - Z_{1}$. Let $\hat{U} = U \times \N$ and we will denote a generic element of $\hat{U}$ by $e = (a,j)$. With this notation the original weighted Jaccard similarity equals $$jacc(w_1,w_2) = w_{1}(U)/w_{2}(U).$$ The Jaccard similarity between $S_1$ and $S_2$ on the other hand is $$jacc(S_1,S_2) = jacc(Z_1,Z_2) = Z_{1}(\hat{U})/Z_{2}(\hat{U}).$$   Also, note that for $i=1,2$ and for each $a\in U$, the random variables $Z_i(a,j)$ are all Bernoulli random variables.

For $e=(a,j)$, define $X_e = Z_1(e)$, and $Y_e= Z_2(e)-Z_1(e)$. These random variables then satisfy the following properties:
\begin{itemize}
\item $X_e$ and $Y_e$ are Bernoulli random variables.
\item The random variables $\{(X_e,Y_e): e \in \hat{U}\}$ are independent of each other. Thus, $X_e$ may depend on $Y_e$ but not on $X_{e'}$ for $e \neq e'$.
\item For any $e$, at least one of $X_e$ and $Y_e$ is zero. Thus, $\Ex[X_eY_e]=0$.
\item $\sum_{e \in \hat{U}} \Ex[X_e] = w_1(U)$. 
\item $\sum_{e \in \hat{U}} \Ex[X_e+Y_e] = w_2(U)$.
\end{itemize}

We first prove part (4). This is an easy consequence of Chernoff bounds applied to the sums of Bernoulli random variables $X_e$, and to the sum of $(X_e+Y_e)$.  Let $X$ denote $\sum_e X_e$ and $Y$ denote $\sum_e Y_e$. Let $\mu_x =E[X]=w_1(U)$ and $\mu_y=E[Y]  = w_2(U)-w_1(U)$. 

Without loss of generality we can assume that $E[X] \geq E[Y]$ (or else can argue about $\frac{Y}{X+Y}$). 
Now by standard Chernoff bounds,
$$\Pr[|X-\mu_x|\geq \alpha \mu_x] \leq 2\exp(-\alpha^2 \mu_x/3)$$
and
\begin{align*}
\Pr[|X+Y-\mu_x-\mu_y| \geq \alpha &(\mu_x+\mu_y)] \\
&\leq  2\exp(-\alpha^2 (\mu_x+\mu_y)/3).
\end{align*}
Thus, except with probability $4\exp(-\alpha^2\mu_x/3)$, we have 
$$X/\mu_x \in (1-\alpha,1+\alpha)$$
and 
$$(X+Y)/(\mu_x+\mu_y) \in (1-\alpha,1+\alpha).$$
In this case, 
$$\frac{X}{X+Y}/\frac{\mu_x}{\mu_x+\mu_y} \in (\frac{1-\alpha}{1+\alpha},\frac{1+\alpha}{1-\alpha}) \subset (1-2\alpha,1+3\alpha).$$
Thus, $|\frac{X}{X+Y} - \frac{\mu_x}{\mu_x+\mu_y}| \leq 3\alpha$ and setting $\alpha = \sqrt{\frac{3\ln 4/\delta}{\mu_x}}$ implies that

\begin{align*}
\Pr[|\frac{X}{X+Y} - \frac{\mu_x}{\mu_x+\mu_y}| \geq 3\sqrt{\frac{3\ln 4/\delta}{\mu_x}}] \leq \delta,
\end{align*}
 thus proving part (4). We remark that we did not attempt to optimize the constants here.

Finally, we will prove the following result, which implies part (3).
\begin{theorem} \label{thm:biastech}
Let $(X_1,Y_1),\ldots, (X_m,Y_m)$ be a sequence of independent tuples of Bernoulli random variables such that $\Ex[X_i] = p_i$, $\Ex[Y_i]=q_i$ and $\Ex[X_i Y_i]=0$ (i.e., they are never 1 together). Let $\mu_x = \sum_i p_i$ and $\mu_y = \sum_i q_i$.  Let $X=\sum_i X_i$ and $Y=\sum_i Y_i$. Then assuming\footnote{We use the convention that $0/0=1$ for the left inequality, and $0/0=0$ for the right one.} that $\mu_x+\mu_y >1$,
\begin{align*}
\frac{\mu_x-1}{\mu_x+\mu_y - 1} \leq \Ex[\frac{X}{X+Y}] \leq \frac{\mu_x}{\mu_x+\mu_y - 1}.
\end{align*}
\end{theorem}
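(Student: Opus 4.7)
The plan is to decompose $E[X/(X+Y)] = \sum_i E[X_i/(X+Y)]$ by linearity of expectation and handle each summand by conditioning on $X_i$. Because $X_i$ is $\{0,1\}$-valued and $X_iY_i=0$ pointwise (being $\{0,1\}$-valued with expectation $0$), conditioning on the event $X_i=1$ forces $Y_i=0$; and because $(X_i,Y_i)$ is independent of the other pairs, the conditional distribution of $S_i := \sum_{j\neq i}(X_j+Y_j)$ is unchanged. Writing $\mu = \mu_x+\mu_y$, this gives $X+Y = 1 + S_i$ on that event, so $E[X_i/(X+Y)] = p_i \cdot E[1/(1+S_i)]$. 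Note also that each $X_j+Y_j$ is itself Bernoulli (it cannot equal $2$), so $S_i$ is a sum of independent Bernoulli random variables with mean $\mu - p_i - q_i$, and in particular $p_i+q_i \leq 1$ for every $i$.

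Next, I would apply the upper half of Lemma~\ref{lem:expinverse} with $A=1$ to obtain $E[1/(1+S_i)] \leq 1/(\mu - p_i - q_i)$. Using $p_i+q_i \leq 1$ this is further bounded by $1/(\mu-1)$ (well-defined by the hypothesis $\mu > 1$). Summing over $i$ gives the right inequality,
\[
E\!\left[\frac{X}{X+Y}\right] \;\leq\; \sum_i \frac{p_i}{\mu-1} \;=\; \frac{\mu_x}{\mu-1}.
\]
(This computation is done with the convention $0/0=0$: when $X_i=0$ the summand is $0$ regardless, and when $X_i=1$ the denominator is automatically positive.)

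For the lower bound, trying to use the left half of Lemma~\ref{lem:expinverse} directly would give only $E[X/(X+Y)]\geq \mu_x/(\mu+1)$, which is weaker than the claim, so instead I would exploit symmetry. The identical argument applied with the roles of $X$ and $Y$ swapped yields $E[Y/(X+Y)] \leq \mu_y/(\mu-1)$, again under the $0/0=0$ convention. Now, with the two conventions from the footnote (treating $X/(X+Y)$ as $1$ on the event $\{X+Y=0\}$, and $Y/(X+Y)$ as $0$ on the same event), the identity $\frac{X}{X+Y} + \frac{Y}{X+Y} = 1$ holds pointwise. Taking expectations gives
\[
E\!\left[\frac{X}{X+Y}\right] \;=\; 1 - E\!\left[\frac{Y}{X+Y}\right] \;\geq\; 1 - \frac{\mu_y}{\mu-1} \;=\; \frac{\mu_x - 1}{\mu - 1},
\]
which is the left inequality.

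The main obstacle is recognizing that a direct lower bound via the Jensen side of Lemma~\ref{lem:expinverse} is too weak, and that the correct move is to bound the complementary ratio $Y/(X+Y)$ from above instead. The delicate bookkeeping is the handling of $\{X+Y=0\}$, but the two complementary $0/0$ conventions stated in the theorem are precisely what make the telescoping $X/(X+Y)+Y/(X+Y)=1$ hold identically, so the symmetry argument goes through without any extra correction term.
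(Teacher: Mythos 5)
Your proof is correct, and it takes a genuinely different route from the paper's. The paper conditions on the entire vector $(X_1,\ldots,X_m)$, applies Lemma~\ref{lem:expinverse} to the inner expectation over the $Y_i$'s, computes $\Ex[Y_i\mid X_i]=(1-X_i)q_i/(1-p_i)$, and is then left with a nonlinear function of the $X_i$'s of the form $\sum_i X_i/(\sum_i \gamma_i X_i+\sum_i\beta_i-1)$; it must verify concavity in each coordinate and apply Jensen's inequality coordinate-by-coordinate to replace each $X_i$ by $p_i$. You sidestep all of that: by writing $\Ex[X/(X+Y)]=\sum_i \Ex[X_i/(X+Y)]$ and conditioning only on the single event $\{X_i=1\}$ (which forces $Y_i=0$ and, by independence of the tuples, leaves $S_i=\sum_{j\neq i}(X_j+Y_j)$ with its unconditional distribution as a sum of independent Bernoullis), Lemma~\ref{lem:expinverse} with $A=1$ applies directly and linearity finishes the job. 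Your observation that $p_i+q_i\leq 1$ (since $X_i+Y_i$ is Bernoulli), together with $\mu_x+\mu_y>1$, correctly justifies the relaxation $1/(\mu-p_i-q_i)\leq 1/(\mu-1)$, and your handling of the $0/0$ conventions in the symmetry step $\frac{X}{X+Y}=1-\frac{Y}{X+Y}$ matches the paper's. The lower-bound-via-complement trick is identical in both proofs. Your argument is shorter and more elementary; it even exposes a slightly sharper intermediate bound $\sum_i p_i/(\mu-p_i-q_i)$ before the final relaxation. The one thing the paper's heavier conditioning machinery buys is that it is structured to extend to the dependent version (Theorem~\ref{thm:biastechdep}), where the case analysis on $\Ex[X_i+Y_i+Y'_i\mid X_i]$ slots into the same template; your per-index conditioning would need to be reworked there to account for the correlation between $X_i$ and $Y'_i$.
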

\begin{proof}


We first write
\begin{align*}
\Ex&[\frac{X}{X+Y}] \\
&=  \Ex_{(X_1,\ldots,X_n)}[X\cdot \Ex_{Y_1,\ldots,Y_n}[\frac{1}{X +Y}\mid X_1,\ldots,X_n],
\end{align*}

When $X=0$, the expression inside the outer expectation is zero. For $X \geq 1$, we apply Lemma~\ref{lem:expinverse} to conclude that
\begin{align*}
X\Ex_{(Y_1,\ldots,Y_n)}&[\frac{1}{X+Y}| X_1,\ldots,X_n]\\
 &\leq \frac{X}{X+\Ex[Y|X_1,\ldots,X_n]-1}.
\end{align*}
Now recall that $\Ex[Y_i|X_i=1]=0$. It follows that $\Ex[Y_i|X_i] = (1-X_i)q_i/(1-p_i)$.
Thus, 
\begin{align*}
\Ex&[\frac{X}{X+Y}] = \\
&=\Ex_{(X_1,\ldots,X_n)}[\sum_i X_i/(\sum_i (X_i + \frac{(1-X_i)q_i}{1-p_i}) -1)]\\
&=\Ex_{(X_1,\ldots,X_n)}[\sum_i X_i/(\sum_i \gamma_i X_i + \sum_i \beta_i -1)
\end{align*}
where $\beta_i = \frac{q_i}{1-p_i}$ and $\gamma_i =1-\beta_i$. This expression is easily seen to be concave in each $X_i$ (for any fixing of the other $X_j$'s). Indeed denoting the numerator by $f$ and the denominator by $g$, the partial derivative $\pder{f/g}{X_i} = \frac{1}{g} - \frac{\gamma_i f}{g^2}$ and $\pdertwo{f/g}{X_i} = -\frac{2\gamma_i}{g^2}(1-\gamma_i\cdot \frac{f}{g})$. Since both $f/g$ and $\gamma_i$ are at most 1, the concavity follows. Thus, using Jensen's inequality, we can one-by-one replace the random variable $X_i$ by its expectation. Rearranging we get
\begin{align*}
\Ex[\frac{X}{X+Y}] 
&\leq \frac{\sum_i \Ex[X_i]}{\sum_i \gamma_i \Ex[X_i] + \sum_i \beta_i -1}
&= \frac{\mu_x}{\mu_x+\mu_y-1},
\end{align*}
which implies the second inequality. By symmetry
\begin{align*}
\Ex[\frac{Y}{X+Y}] &\leq  \frac{\mu_y}{\mu_x+\mu_y - 1}
\end{align*}
Noting that $\frac{X}{X+Y} = 1- \frac{Y}{X+Y}$ then implies the first inequality.
\end{proof}


This implies that both $\Ex[jacc(S_1,S_2)]$ and $jacc(w_1,w_2) = \frac{w_1}{w_2}$ are sandwiched in the interval $[\frac{w_1-1}{w_2-1},\frac{w_1}{w_2-1}]$. Since this interval is of size $\frac{1}{w_2-1}$, this implies part (3) and completes the proof of Theorem~\ref{thm:reduction}.

\subsection{Proof of Theorem~\ref{thm:reductiondep}}

In this section, we prove Theorem~\ref{thm:reductiondep}.
We show that even if the random threshold for rounding $(a,j)$ is chosen as $h(r,a)$ instead of $h(r,a,j)$, the properties of the reduction hold. Once again, there is no loss of generality in assuming that $w_1=w_{\min}$ and $w_2=w_{\max}$. We note that for $i=1,2$,  and for any $a$, the variables $Z_i(a,j)$ are all deterministic except for $Z_i(a,\lceil w_{i}(a)\rceil)$. Thus, for each $i$, the Bernoulli random variables $\{Z_i(e): e \in U \times \N\}$ are all independent. This is sufficient for the proofs of parts (1), (2) and (4) to go through unchanged. It remains to argue that part (3) still holds.


To prove this, we will need to handle additional dependencies between the tuples $(X_e,Y_e)$ as defined above. For a fixed $a$, let $e_a$ denote $(a,\lceil w_1(a)\rceil)$, and $e'_a$ denote $(a,\lceil w_2(a)\rceil)$. If $e_a=e'_a$, then the only non-deterministic random variable amongst $\{(X_e,Y_e): e=(a,j)\}$ is the tuple $(X_{e_a},Y_{e_a})$. If on the other hand, $e_a < e'_a$, then $X_{e'_a}$ is deterministically $0$ and $Y_{e_a}=1-X_{e_a}$.  Moreover the random variables $X_{e_a}$ and $Y_{e'_a}$ are correlated through a common threshold $thresh$, with $X_{e_a} = \one(thresh<w_2(a) - \lfloor w_2(a)\rfloor)$ and $Y_{e'_a}= \one(thresh<w_2(a) - \lfloor w_2(a)\rfloor)$. 

For any $e$, let $Y'_e$ denote $Y_{e'_a}$ if $e=e_a$ for some $a$, and let $Y_e$ be deterministically $0$ otherwise. For $e=e'_a\neq e_a$ we redefine $Y_e$ to be $0$. This essentially moves the  troubling random variable $Y_{e'_a}$ to $Y'_{e_a}$. Furthermore, this regrouping ensures that all these triples $\{(X_e, Y_e,Y'_e): e \in U \times \N\}$ are independent of each other. The next result is an analog of Theorem~\ref{thm:biastech}, albeit with a more complex proof.


\begin{theorem} \label{thm:biastechdep}
Suppose $(X_1,Y_1,Y'_1),\ldots, (X_n,Y_n,Y'_n)$ is a se\-quence of independent tuples of Bernoulli random variables such that $\Ex[X_i] = p_i$, $\Ex[Y_i]=q_i$, $\Ex[Y'_i]=q'_i$, and $\Ex[X_i Y_i]=0$, i.e., $X_i$ and $Y_i$ are never 1 simultaneously. Further suppose that either (a) $q'_i=0$ or (b) $q_i=1-p_i$, $X_{i} = \one(thresh<p_i)$ and $Y'_{i}= \one(thresh<q'_i)$ for threshold $thresh$ chosen uniformly in $[0,1]$. 
 Let $X=\sum_i X_i$, $Y=\sum_i (Y_i+Y'_i)$, $\mu_x = \Ex[X]$ and $\mu_y=\Ex[Y]$. Then assuming\footnote{We use the convention that $0/0=1$ for the left inequality, and $0/0=0$ for the right one.} that $\mu_x+\mu_y >1$,
\begin{align*}
\frac{\mu_x-1}{\mu_x+\mu_y - 1} \leq \Ex[\frac{X}{X+Y}] \leq \frac{\mu_x}{\mu_x+\mu_y - 1}.
\end{align*}
\end{theorem}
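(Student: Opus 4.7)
My plan is to mirror the proof of Theorem~\ref{thm:biastech}: condition on $X_1,\ldots,X_n$, apply Lemma~\ref{lem:expinverse}, and finish via the same concavity-based Jensen argument, with all new work residing in the computation of $\Ex[Y_i+Y'_i\mid X_i]$ in the shared-threshold case~(b).

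I would first compute $\Ex[Y_i+Y'_i\mid X_i]$ case by case. In case~(a), $Y'_i=0$ and $\Ex[Y_i\mid X_i]=q_i(1-X_i)/(1-p_i)$ exactly as before. In case~(b) with $p_i\ge q'_i$, reading off the threshold gives the triples $(X_i,Y_i,Y'_i)=(1,0,1),(1,0,0),(0,1,0)$ with probabilities $q'_i,p_i-q'_i,1-p_i$, so $\Ex[Y_i+Y'_i\mid X_i]=1-(1-q'_i/p_i)X_i$. In case~(b) with $p_i<q'_i$ the triples are $(1,0,1),(0,1,1),(0,1,0)$ with probabilities $p_i,q'_i-p_i,1-q'_i$, giving $\Ex[Y_i+Y'_i\mid X_i]=1+(q'_i-p_i)(1-X_i)/(1-p_i)$. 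In every case this is affine in $X_i$ of the form $a_i-b_iX_i$ with $b_i\in[0,1]$, and a short check (using $q_i+q'_i=1-p_i+q'_i$ in case~(b)) shows $\sum_i(a_i-b_ip_i)=\mu_y$.

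Given $X_1,\ldots,X_n$, the variable $Y$ is a deterministic value (coming from the $Y_i=1-X_i$ parts in case~(b)) plus a sum of conditionally independent Bernoulli contributions, so Lemma~\ref{lem:expinverse} applies (with $A=X$ plus the deterministic part) and yields, for $X\ge 1$,
\[
X\cdot\Ex[(X+Y)^{-1}\mid X_1,\ldots,X_n]\le \frac{\sum_i X_i}{\sum_i(1-b_i)X_i+\sum_i a_i-1},
\]
and the $X=0$ contribution is zero. Setting $\gamma_i=1-b_i\in[0,1]$, the right-hand side is concave in each $X_i$ by the identical $\partial^2/\partial X_i^2$ computation used in Theorem~\ref{thm:biastech}, so Jensen's inequality lets me replace each $X_i$ by $p_i$; the identity $\sum_i(a_i-b_ip_i)=\mu_y$ collapses the denominator to $\mu_x+\mu_y-1$, giving the upper bound.

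For the lower bound I cannot simply swap $X$ and $Y$ as in Theorem~\ref{thm:biastech}, since the augmented $Y$-side is no longer symmetric with $X$. Instead I would prove the dual bound $\Ex[Y/(X+Y)]\le \mu_y/(\mu_x+\mu_y-1)$ by the same template but now conditioning on the pairs $(Y_i,Y'_i)$: in case~(a) each $X_i$ is then Bernoulli with parameter $p_i(1-Y_i)/(1-q_i)$, while in case~(b) the pair $(Y_i,Y'_i)$ already determines $X_i$ (read off the three-outcome tables above, noting that $(0,0)$ is impossible in case~(b) with $p_i<q'_i$). Applying Lemma~\ref{lem:expinverse} to the remaining Bernoulli randomness and then concavity plus Jensen in each $Y_i$ and $Y'_i$ yields the dual bound, and combining with $\Ex[X/(X+Y)]=1-\Ex[Y/(X+Y)]$ (under the stated $0/0$ convention) gives the claimed lower bound $(\mu_x-1)/(\mu_x+\mu_y-1)$. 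The main obstacle will be this lower-bound concavity check, since conditioning on the $(Y_i,Y'_i)$ pairs introduces two variables per case-(b) index and one must verify that all the relevant coefficients lie in $[0,1]$ in each of the sub-cases $p_i\ge q'_i$ and $p_i<q'_i$.
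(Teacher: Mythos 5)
Your proposal is correct and follows essentially the same route as the paper's proof: condition on $X_1,\ldots,X_n$, apply Lemma~\ref{lem:expinverse} (valid since given the $X_i$'s at most one of $Y_i,Y'_i$ per index remains random), compute $\Ex[Y_i+Y'_i\mid X_i]$ in the same three cases to get an affine form with slope coefficients in $[0,1]$, and finish with the concavity/Jensen argument; the lower bound is likewise obtained, as in the paper, by bounding $\Ex[Y/(X+Y)]$ via conditioning on the $(Y_i,Y'_i)$ pairs. Your case tables and the identity $\sum_i(a_i-b_ip_i)=\mu_y$ match the paper's computations, and the lower-bound concavity check you flag as the main obstacle goes through exactly as you anticipate (in case (b) the conditional term is $1+Y'_i$, affine with coefficient $1$).
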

\begin{proof}


We first write
\begin{align*}
\Ex&[\frac{X}{X+Y}] \\
&=  \Ex_{(X_1,\ldots,X_n)}[X\cdot \Ex_{Y_1,Y'_1\ldots,Y_n,Y'_n}[\frac{1}{X +Y}\mid X_1\ldots,X_n],
\end{align*}

When $X=0$, the expression inside the outer expectation is zero. For $X \geq 1$, we once again apply Lemma~\ref{lem:expinverse}, which applies since conditioned on $X_i$, either $Y_i$ is fully determined, or $Y'_i$ is deterministically zero. In either case, at most one of $Y_i,Y'_i$ is random. We conclude that
\begin{align*}
X\Ex_{(Y_1,Y'_1\ldots,Y_n,Y'_n)}&[\frac{1}{X+Y}| X_1,\ldots,X_n]\\
 &\leq \frac{X}{X+\Ex[Y|X_1,\ldots,X_n]-1}.
\end{align*}

We now need to estimate these expectations. We will compute $\Ex[X_i+Y_i+Y'_i|X_i]$. We consider several cases. In each case, we will show that this expectation can be written as $\beta_i + \gamma_i X_i$, with $\gamma_i \leq 1$. 

Case 1: $q'_i=0$. This case is similar to the setting of Theorem~\ref{thm:biastech}. Here $\Ex[Y_i|X_i=1]=0$. It follows that $\Ex[X_i+Y_i+Y'_i|X_i] = X_i+(1-X_i)q_i/(1-p_i) = \frac{q_i}{1-p_i} + X_i\frac{1-p_i-q_i}{1-p_i}$.  Clearly, $\gamma_i \leq 1$.

Case 2(a): $q_i=1-p_i, q'_i \geq p_i$. In this case, $X_i=\one(thresh<p_i)$, so that $X_i=$ implies that $Y'_i=1$ as well. If $X_i=0$, then $Y'_i$ is $1$ with probability exactly $(q'_i-p_i)/(1-p_i)$. Moreover, $Y_i = 1-X_i$. Thus in this case, $\Ex[X_i+Y_i+Y'_i|X_i] = 1 + X_i + (1-X_i)(q'_i-p_i)/(1-p_i) = 1 + \frac{q'_i-p_i}{1-p_i} + X_i\frac{1-q'_i}{1-p_i}$. By assumption $\gamma_i \leq 1$.

Case 2(b): $q_i=1-p_i, q'_i < p_i$. In this case $X_i=0$ iff $thresh>p_i$ in which case $Y'_i=0$ as well. If $X_i=1$, then $Y'_i$ is $1$ with probability exactly $q'_i/p_i$. Once again, $Y_i=1-X_i$. Thus in this case, $\Ex[X_i+Y_i+Y'_i|X_i] = 1 + X_i\frac{q'_i}{p_i}$. By assumption $\gamma_i < 1$.

Thus in all cases, we can write 
\begin{align*}
\Ex&[\frac{X}{X+Y}] = \\
&=\Ex_{(X_1,\ldots,X_n)}[\sum_i X_i/(\sum_i \gamma_i X_i + \sum_i \beta_i -1)
\end{align*}
with $\gamma_i \leq 1$. This expression is then easily seen to be concave in each $X_i$. Indeed denoting the numerator by $f$ and the denominator by $g$, the partial derivative $\pder{f/g}{X_i} = \frac{1}{g} - \frac{\gamma_i f}{g^2}$ and $\pdertwo{f/g}{X_i} = -\frac{2\gamma_i}{g^2}(1-\gamma_i\cdot \frac{f}{g})$. Since both $f/g$ and $\gamma_i$ are at most 1, the concavity follows. Thus using Jensen's inequality,
\begin{align*}
\Ex[\frac{X}{X+Y}] 
&\leq \frac{\sum_i \Ex[X_i]}{\sum_i \gamma_i \Ex[X_i] + \sum_i \beta_i -1}.
\end{align*}

It remains to compute the value of the denominator. Since $\Ex[X_i+Y_i+Y'_i|X_i] = \beta_i+\gamma_i X_i$, it follows that $\Ex[X_i+Y_i+Y'_i] = \beta_i + \gamma_i \Ex[X_i]$. Thus the denominator is exactly $\Ex[X+Y] - 1$.  This implies the second inequality.

To prove the first inequality, it will once again suffice to argue that $\E[\frac{Y}{X+Y}]$ is bounded above by $\frac{\mu_y}{\mu_x +\mu_y -1}$. Unlike Theorem~\ref{thm:biastech}, the variables $X$ and $Y$ are not symmetric. The proof however is relatively straightforward and we sketch it next.  We will now write 
\begin{align*}
\Ex&[\frac{Y}{X+Y}] \\
&=  \Ex_{(Y_1,Y'_1\ldots,Y_n,Y'_n)}[Y\cdot \Ex_{X_1,\ldots,X_n}[\frac{1}{X +Y}\mid Y_1,Y'_1\ldots,Y_n,Y'_n],
\end{align*}

When $Y=0$, the expression inside the outer expectation is zero. For $Y \geq 1$, we once again apply Lemma~\ref{lem:expinverse} to write
\begin{align*}
Y\cdot &\Ex_{X_1,\ldots,X_n}[\frac{1}{X +Y}\mid Y_1,Y'_1\ldots,Y_n,Y'_n] \\
 &\leq \frac{Y}{Y+\Ex[X|Y_1,Y'_1\ldots,Y_n,Y'_n]-1}.
\end{align*}
We once again handle the two cases separately. In the case that $q_i=1-p_i$, the variable $X_i$ is fixed given $Y_i$, and the term $Y_i+Y'_i + \Ex[X_i|Y_i,Y'_i]$ is equal to $1+Y'_i$ (i.e., a deterministic quantity under the conditioning). If on the other hand, $q'_i=0$, then $Y'_i$ is deterministically 0, and the term $Y_i+Y'_i + \Ex[X_i|Y_i,Y'_i]$  can be written as $\beta_i + \gamma_i Y_i$ with $\gamma_i \leq 1$. Thus we can apply Jensen's inequality to derive the claimed bound.
\end{proof}

This then implies part (3) of Theorem~\ref{thm:reductiondep} and completes its proof.

\section{Synthetic Test Results}

\makeatletter
\newenvironment{tablehere}
{\def\@captype{table}}
{}
\newenvironment{figurehere}
{\def\@captype{figure}}
{}
\makeatother

We produced pseudo-random artificial sets of weights, which we modified to center around a few chosen levels of exact weighted Jaccard similarity: 95\%, 90\%, 85\%, 80\%, 70\%, 65\%, 60\%, 55\%,  50\%, and 40\%. We implemented a single scaling version of our algorithm, so that we could estimate Jaccard for all pairs, as well as implementing Ioffe's algorithm. We selected full-length samples, as well as 2-bit, 1-bit, and half-bit compressions; the number of samples was chosen from 64, 128, 256, and 512.


The first figures we present compare Ioffe sketching against our algorithm using exact Jaccard and our algorithm using binning, measuring the error between the values these algorithms produce, versus the underlying truth. In Figure~\ref{fig:IoffevsHMTaverageError}, the yellow bars show the average absolute error when estimating weighted Jaccard using 128 Ioffe samples, shown at a variety of underlying true Jaccard values ranging from 0.40 (on the right) to 0.96, while the blue bars show the estimate using our algorithm computing an estimated 1024 samples randomly assigned to 128 bins. The green bars show the average absolute error introduced by randomized rounding, in which every input item is reassigned to an integer close to its true weight, and the Jaccard value of these roundings are computed exactly.  We display these because, following the theorems above on bias, all bias away from true Jaccard is introduced by rounding; both Ioffe's sampling, and binning preserve expected Jaccard values. At high true Jaccard values both Ioffe and we perform well, missing a true Jaccard value of 0.96 by approximately 0.01; this corresponds to getting a mismatch in slightly over one bin; due simply to quantization in 128 bins, we have to expect an error of at least one in 256. We observe that our algorithm is insignificantly worse than Ioffe at very high Jaccard values from 0.9 to 1.0 (largely due to errors introduced by rounding. Our algorithm is repeatably somewhat better than Ioffe at Jaccard values between 0.8 and 0.9, and is roughly equivalent for Jaccard values down to 0.5, below which point Ioffe sampling is better than our technique, although the absolute error for our technique never exceeds 0.035, corresponding to getting an average excess mismatch of approximately 4 bins.

The next figure, Figure~\ref{fig:IoffevsHMTaverageError} shows the same bars, but with a new color assignment (Ioffe is now blue, our algorithm is now red, and randomized rounding is gray), and computes the standard deviation of the observation made above corresponding to each true Jaccard value.  In this graph we see that our standard deviation is smaller than Ioffe's except at the lowest true Jaccard value, at which point the mismatch in scalings starts to dominate the computation.  We also see that the standard deviation is nearly as large as the absolute error, mostly coming from estimates which are closer to the true value.

\begin{figurehere}
 \centering
 \includegraphics[width=0.8 \columnwidth]{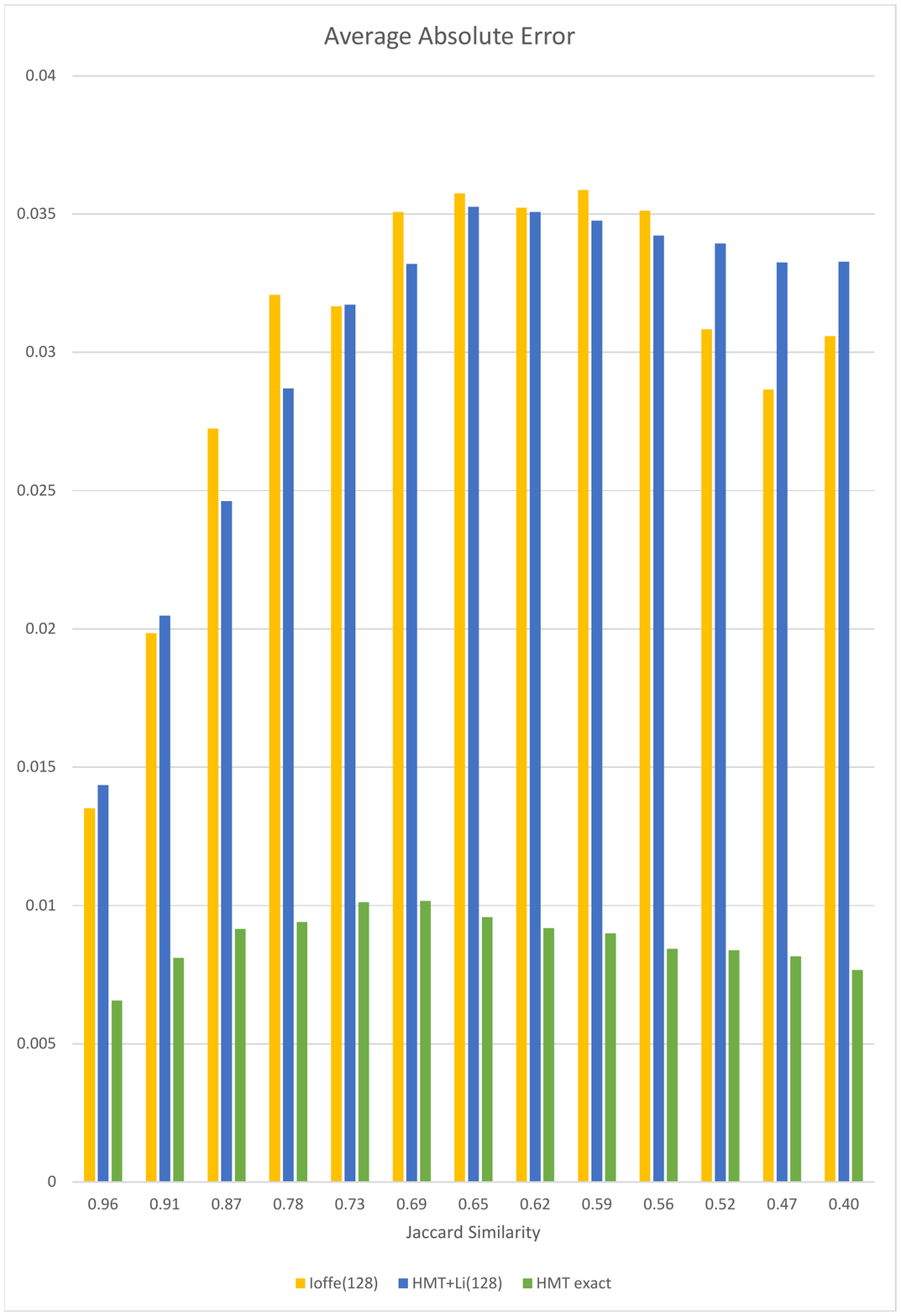}
 \caption{Comparison of Ioffe's and our sampling in terms of average error}
\label{fig:IoffevsHMTaverageError}
 \end{figurehere}

\begin{figurehere}
 \centering
 \includegraphics[width=0.8 \columnwidth]{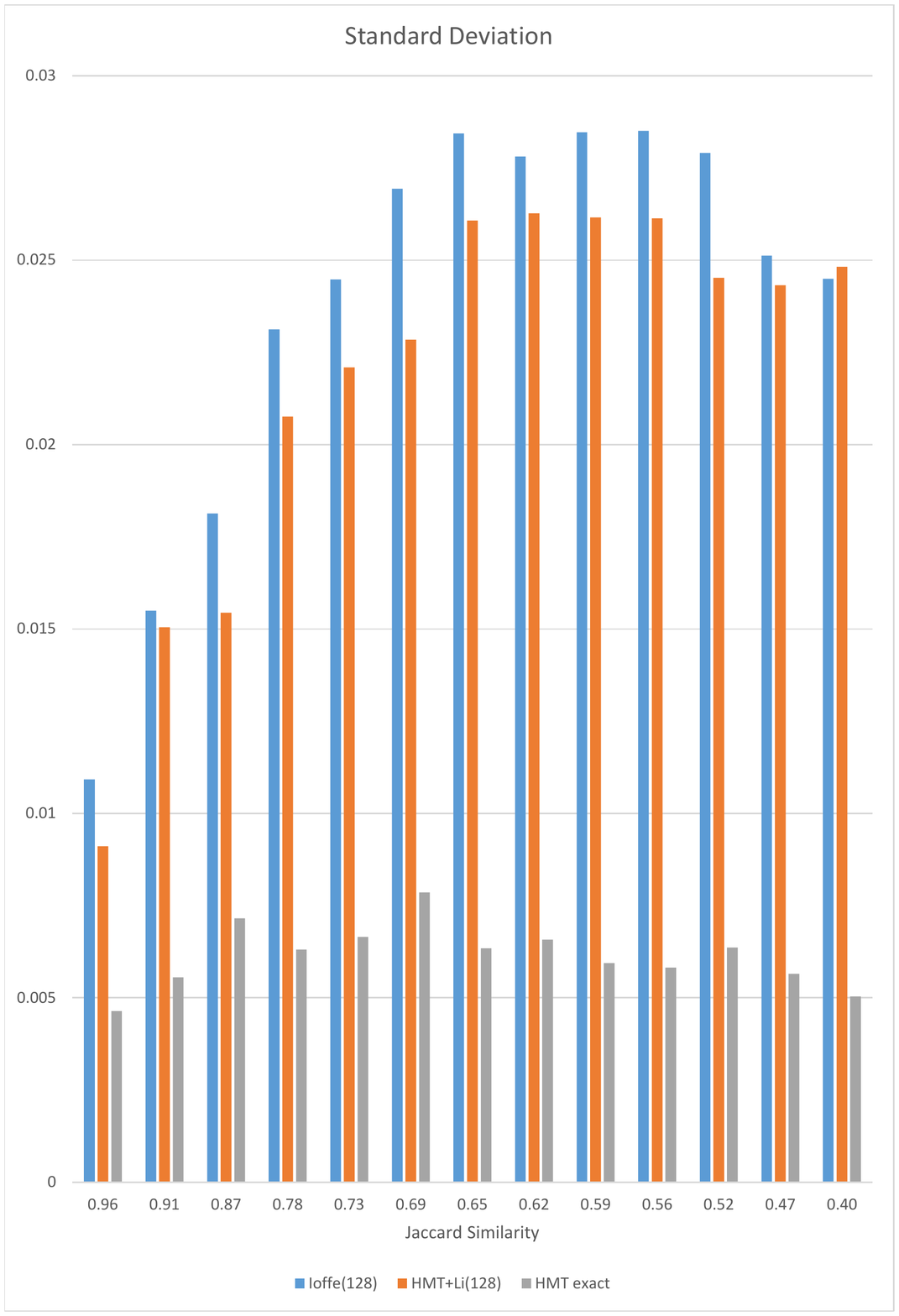}
 \caption{Comparison of Ioffe's and our sampling in terms of standard deviation}
\label{fig:IoffevsHMTdeviation}
 \end{figurehere}

\nocite{LiSK13,AlonsoFM13,TheobaldSP08}

\section{Conclusions}
We have presented a simple scheme to reduce weighted sets to unweighted ones efficiently, in such a way that the size of the resulting unweighted set is tunable, and the Jaccard between sets of comparable sizes is preserved very accurately. We have shown how to use this scheme for the problem of building sketches for Jaccard similarity estimation for weighted sets. The resulting scheme is two orders of magnitude faster than previously known schemes for typical setting of parameters, and does not suffer any significant loss in quality. We prove that the scheme has a non-zero but negligible bias, and satisfies tail inequalities similar to the unweighted case. We also show empirical results showing that this computational benefit comes at negligible cost in accuracy in the interesting case of large similarity.
\bibliographystyle{abbrv}
\bibliography{jaccard}

\end{document}